\newtheorem{definition}{Definition}
\newtheorem{proposition}[definition]{Proposition}
\newtheorem{lemma}[definition]{Lemma}
\newtheorem{theorem}[definition]{Theorem}
\newtheorem{corollary}[definition]{Corollary}
\newtheorem{conjecture}[definition]{Conjecture}
\newtheorem{remark}[definition]{Remark}
\newtheorem{example}[definition]{Example}
\newtheorem{question}[definition]{Question}
\def\squareforqed{\hbox{\rlap{$\sqcap$}$\sqcup$}}
\def\qed{\ifmmode\squareforqed\else{\unskip\nobreak\hfil
\penalty50\hskip1em\null\nobreak\hfil\squareforqed
\parfillskip=0pt\finalhyphendemerits=0\endgraf}\fi}
\def\endenv{\ifmmode\;\else{\unskip\nobreak\hfil
\penalty50\hskip1em\null\nobreak\hfil\;
\parfillskip=0pt\finalhyphendemerits=0\endgraf}\fi}
\newenvironment{proof}{\noindent \textbf{{Proof.~} }}{\qed}
\def\Dbar{\leavevmode\lower.6ex\hbox to 0pt
{\hskip-.23ex\accent"16\hss}D}
\def\url@leostyle{%
  \@ifundefined{selectfont}{\def\UrlFont{\sf}}{\def\UrlFont{\small\ttfamily}}}
\def\bcj{\begin{conjecture}}
\def\ecj{\end{conjecture}}
\def\bcr{\begin{corollary}}
\def\ecr{\end{corollary}}
\def\bd{\begin{definition}}
\def\ed{\end{definition}}
\def\bea{\begin{eqnarray}}
\def\eea{\end{eqnarray}}
\def\bem{\begin{enumerate}}
\def\eem{\end{enumerate}}
\def\bex{\begin{example}}
\def\eex{\end{example}}
\def\bim{\begin{itemize}}
\def\eim{\end{itemize}}
\def\bl{\begin{lemma}}
\def\el{\end{lemma}}
\def\bma{\begin{bmatrix}}
\def\ema{\end{bmatrix}}
\def\bpf{\begin{proof}}
\def\epf{\end{proof}}
\def\bpp{\begin{proposition}}
\def\epp{\end{proposition}}
\def\bqu{\begin{question}}
\def\equ{\end{question}}
\def\br{\begin{remark}}
\def\er{\end{remark}}
\def\bt{\begin{theorem}}
\def\et{\end{theorem}}
\def\btb{\begin{tabular}}
\def\etb{\end{tabular}}
\newcommand{\nc}{\newcommand}
\def\a{\alpha}
\def\b{\beta}
\def\g{\gamma}
\def\r{\rho}
\def\s{\sigma}
\def\ph{\varphi}
\def\ps{\psi}
\def\G{\Gamma}
\def\T{\Theta}
 \nc{\bbA}{\mathbb{A}} \nc{\bbB}{\mathbb{B}} \nc{\bbC}{\mathbb{C}}
 \nc{\bbD}{\mathbb{D}} \nc{\bbE}{\mathbb{E}} \nc{\bbF}{\mathbb{F}}
 \nc{\bbG}{\mathbb{G}} \nc{\bbH}{\mathbb{H}} \nc{\bbI}{\mathbb{I}}
 \nc{\bbJ}{\mathbb{J}} \nc{\bbK}{\mathbb{K}} \nc{\bbL}{\mathbb{L}}
 \nc{\bbM}{\mathbb{M}} \nc{\bbN}{\mathbb{N}} \nc{\bbO}{\mathbb{O}}
 \nc{\bbP}{\mathbb{P}} \nc{\bbQ}{\mathbb{Q}} \nc{\bbR}{\mathbb{R}}
 \nc{\bbS}{\mathbb{S}} \nc{\bbT}{\mathbb{T}} \nc{\bbU}{\mathbb{U}}
 \nc{\bbV}{\mathbb{V}} \nc{\bbW}{\mathbb{W}} \nc{\bbX}{\mathbb{X}}
 \nc{\bbZ}{\mathbb{Z}}
 \nc{\bA}{{\bf A}} \nc{\bB}{{\bf B}} \nc{\bC}{{\bf C}}
 \nc{\bD}{{\bf D}} \nc{\bE}{{\bf E}} \nc{\bF}{{\bf F}}
 \nc{\bG}{{\bf G}} \nc{\bH}{{\bf H}} \nc{\bI}{{\bf I}}
 \nc{\bJ}{{\bf J}} \nc{\bK}{{\bf K}} \nc{\bL}{{\bf L}}
 \nc{\bM}{{\bf M}} \nc{\bN}{{\bf N}} \nc{\bO}{{\bf O}}
 \nc{\bP}{{\bf P}} \nc{\bQ}{{\bf Q}} \nc{\bR}{{\bf R}}
 \nc{\bS}{{\bf S}} \nc{\bT}{{\bf T}} \nc{\bU}{{\bf U}}
 \nc{\bV}{{\bf V}} \nc{\bW}{{\bf W}} \nc{\bX}{{\bf X}}
 \nc{\bZ}{{\bf Z}}
\nc{\cA}{{\cal A}} \nc{\cB}{{\cal B}} \nc{\cC}{{\cal C}}
\nc{\cD}{{\cal D}} \nc{\cE}{{\cal E}} \nc{\cF}{{\cal F}}
\nc{\cG}{{\cal G}} \nc{\cH}{{\cal H}} \nc{\cI}{{\cal I}}
\nc{\cJ}{{\cal J}} \nc{\cK}{{\cal K}} \nc{\cL}{{\cal L}}
\nc{\cM}{{\cal M}} \nc{\cN}{{\cal N}} \nc{\cO}{{\cal O}}
\nc{\cP}{{\cal P}} \nc{\cQ}{{\cal Q}} \nc{\cR}{{\cal R}}
\nc{\cS}{{\cal S}} \nc{\cT}{{\cal T}} \nc{\cU}{{\cal U}}
\nc{\cV}{{\cal V}} \nc{\cW}{{\cal W}} \nc{\cX}{{\cal X}}
\nc{\cY}{{\cal Y}}
\nc{\cZ}{{\cal Z}}
\nc{\hA}{{\hat{A}}} \nc{\hB}{{\hat{B}}} \nc{\hC}{{\hat{C}}}
\nc{\hD}{{\hat{D}}} \nc{\hE}{{\hat{E}}} \nc{\hF}{{\hat{F}}}
\nc{\hG}{{\hat{G}}} \nc{\hH}{{\hat{H}}} \nc{\hI}{{\hat{I}}}
\nc{\hJ}{{\hat{J}}} \nc{\hK}{{\hat{K}}} \nc{\hL}{{\hat{L}}}
\nc{\hM}{{\hat{M}}} \nc{\hN}{{\hat{N}}} \nc{\hO}{{\hat{O}}}
\nc{\hP}{{\hat{P}}} \nc{\hR}{{\hat{R}}} \nc{\hS}{{\hat{S}}}
\nc{\hT}{{\hat{T}}} \nc{\hU}{{\hat{U}}} \nc{\hV}{{\hat{V}}}
\nc{\hW}{{\hat{W}}} \nc{\hX}{{\hat{X}}} \nc{\hZ}{{\hat{Z}}}
\nc{\hn}{{\hat{n}}}
\def\dim{\mathop{\rm Dim}}
\def\lin{\mathop{\rm span}}
\def\max{\mathop{\rm max}}
\def\min{\mathop{\rm min}}
\def\rank{\mathop{\rm rank}}
\def\tr{\mathop{\rm Tr}}
\def\dg{\dagger}
\def\op{\oplus}
\def\ox{\otimes}
\newcommand{\bra}[1]{\langle#1|}
\newcommand{\ket}[1]{|#1\rangle}
\newcommand{\proj}[1]{| #1\rangle\!\langle #1 |}
\newcommand{\braket}[2]{\langle#1|#2\rangle}
\newcommand{\tbc}{\red{TO BE CONTINUED...}}
\newcommand{\opp}{\red{OPEN PROBLEMS}.~}
\newcommand{\red}{\textcolor{red}}
\newcommand{\jmp}{J. Math. Phys.}
\def\Dbar{\leavevmode\lower.6ex\hbox to 0pt
{\hskip-.23ex\accent"16\hss}D}
\begin{document}

\title{Nonexistence of $n$-qubit unextendible product bases of size $2^n-5$}

\author{Lin Chen}
\email{linchen@buaa.edu.cn (corresponding author)}
\affiliation{School of Mathematics and Systems Science, Beihang University, Beijing 100191, China}
\affiliation{International Research Institute for Multidisciplinary Science, Beihang University, Beijing 100191, China}

\def\Dbar{\leavevmode\lower.6ex\hbox to 0pt
{\hskip-.23ex\accent"16\hss}D}
\author {{ Dragomir {\v{Z} \Dbar}okovi{\'c}}}
\email{djokovic@uwaterloo.ca}
\affiliation{Department of Pure Mathematics and Institute for
Quantum Computing, University of Waterloo, Waterloo, Ontario, N2L
3G1, Canada} 

\date{\today}

\pacs{03.65.Ud, 03.67.Mn}

\begin{abstract}
It is known that the $n$-qubit system has no unextendible product bases (UPBs) of cardinality $2^n-1$, $2^n-2$ and $2^n-3$. On the other hand the $n$-qubit UPBs of cardinality $2^n-4$ exist for all $n\ge3$. We prove that they do not exist for cardinality $2^n-5$.
\end{abstract}

\maketitle

%\tableofcontents

\section{Introduction}

The notion of UPBs is fundamental in quantum information theory and has various applications. First, UPBs have been constructed to characterize the nonlocality without entanglement that appears when locally distinguishing product vectors \cite{bdf99}. Second, UPBs have been used to construct positive-partial-transpose (PPT) entangled states \cite{bdm99}. All two-qutrit UPBs have been constructed as well as all two-qutrit PPT entangled states of rank four \cite{cd11}. Furthermore, the multiqubit UPBs have been used to construct Bell inequalities 
without quantum violation \cite{afk12}. Recently, it has been shown that the structure of multiqubit UPBs is related to the so-called orthogonality complete graphs, and many multiqubit  
UPBs have been thus constructed \cite{johnston13}. However the main problem, namely to determine the cardinalities of multiqubit UPBs is still unresolved.

For convenience we denote by $\T_n$ the set of cardinalities of UPBs in the $n$-qubit systems. As we allow a UPB to span the whole Hilbert space, we have $2^n\in\T_n$. It is known that $2^n-1$, $2^n-2$ and $2^n-3$ do not belong to $\T_n$ for any 
$n$, and $2^n-4\in\T_n$ for all $n\ge3$  \cite{johnston14,johnston14c}. 
In these references, extensive computer computations failed 
to find any example of $n$-qubit UPBs of cardinality $2^n-5$. 
In the cases $n=3,4$ it is known that they do not exist.
Hence, the question was raised whether such UPBs exist for some $n\ge5$. In Theorem \ref{thm:the} we prove that they do not exist. Our proof is based on the study of the hypothetical entangled PPT projector $\r$ of rank 5 which a UPB of size 
$2^n-5$ would provide. As a result we constructed a couple of  examples of 5-qubit bound entangled states of rank 5, which  give the affirmative answer to a question raised by Johnston in \cite[Sec. 6]{johnston14}, open problem (1).

The rest of this paper is organized as follows. In Sec. \ref{sec:rank5} we introduce our notation and recall some known facts. Then we prove two auxilliary lemmas and our main result in Sec. \ref{sec:main}. We conclude in Sec. \ref{sec:con}.

\section{Preliminaries} 
\label{sec:rank5}

Let $\cH=\cH_1\ox\cdots\ox\cH_n$ be the Hilbert space of dimension $D$ representing a quantum system $A_1,\ldots,A_n$ consisting of $n$ parties. We are mainly interested in the case where all parties are qubits, i.e. each Hilbert space $\cH_j$ has dimension two. In that case, we fix an orthonormal basis $\ket{0}_j,\ket{1}_j$ of $\cH_j$. Usually, the subscript $j$ will be suppressed. 

We say that a vector $\ket{v}\in\cH$ is a {\em unit vector} 
if $\|v\|=1$. As a rule, we shall not distinguish two unit vectors which differ only in the phase. When $\dim\cH_j=2$, by using this convention, we can say that for any unit vector 
$\ket{v_j}\in\cH_j$ there exists a unique unit vector 
$\ket{v_j^\perp}\in\cH_j$ perpendicular to $\ket{v_j}$. 

A {\em product vector} is a nonzero vector 
$\ket{x}=\ket{x_1}\ox\cdots\ox\ket{x_n}$, which will be 
written also as $\ket{x}=\ket{x_1,\ldots,x_n}$. 
If $\|x\|=1$ we shall assume (as we may) that 
each $\|x_j\|=1$. 
Two product vectors $\ket{x}=\ket{x_1,\ldots,x_n}$ and 
$\ket{y}=\ket{y_1,\ldots,y_n}$ are orthogonal if and only if 
$\ket{x_j}\perp\ket{y_j}$ for at least one index $j$. 
We use the abbreviation {\em OPS} to denote any set of pairwise 
orthogonal unit product vectors in $\cH$. 
The cardinality of an OPS cannot exceed $D$, the 
dimension of $\cH$. We say that an OPS is an OPB, 
{\em orthogonal product basis}, if its cardinality is $D$.
As an example, in the $n$-qubit system, the $2^n$ product vectors $\ket{x_s}=\ket{s_1,\ldots,s_n}$, where 
$s:=(s_1,\ldots,s_n)$ runs through all binary 
$\{0,1\}$-sequences of length $n$, is an OPB. 
We refer to this OPB as the {\em standard OPB}. 
However, there are many other $n$-qubit OPBs and describing or 
classifying them for any $n$ is a very hard problem, see 
\cite{bdm99,DiV03,bravyi,fs09,cd12,cd13,cj15,cd16,johnston13} and our paper \cite{cd16} for the case $n=4$. 

An {\em unextendible product basis} (UPB) is an OPS such that there is no product vector orthogonal to all vectors of the OPS  \cite{bdm99,DiV03}. Originally it was required that UPB does not span the whole Hilbert space $\cH$, but for the sake of convenience we have dropped that restriction. We say that a UPB is {\em proper} if it does not span $\cH$. We record some facts from the introduction section in the following lemma.
\bl
\label{le:maxcardinal}
For $n\ge3$, the two largest integers in $\Theta_n$ are $2^n$ and $2^n-4$.
\el

On the other hand, the problem of finding the smallest element of $\Theta_n$ has been considered by several authors 
\cite{AL01,KF06,bravyi} and it was finally resolved by Johnston \cite[Theorem 1]{johnston13}. We state his theorem as follows.

\bt
\label{thm:upb}
(Jonhston, 2013) The smallest integer in $\Theta_n$ is:

(i) $n+1$ if $n$ is odd;

(ii) $n+2$ if $n\equiv2 \pmod{4}$;

(iii) $n+4$ if $n\equiv0 \pmod{4}$ and $n>8$;

(iv) $6$ if $n=4$ and $11$ if $n=8$.
\et

For small values of $n$, we have $\Theta_1=\{2\}$, 
$\Theta_2=\{4\}$, $\Theta_3=\{4,8\}$. Further, from 
\cite[Table 3]{johnston14} we see that 
\bea
\T_4 &=& \{6,7,8,9,10,12,16\}, \\
\T_5 &\supseteq& \{6,8,9,10,12-26,28,32\}, \\
\T_6 &\supseteq& \{8,9,12,14-58,60,64\}, \\
\T_7 &\supseteq& \{8,12,16,17,18,20-122,124,128\},
\eea 
where $i-j$ means that all integers $k$ in the range 
$i\le k\le j$ are included. 
Johnston \cite{johnston14} asks whether the integers $11,27$ belong to $\T_5$; $10,11,13,59$ to $\T_6$; and $10,11,13,14,15,19,123$ to $\T_7$.

The main result of this note, Theorem \ref{thm:the}, asserts that 
$2^n-5\notin\T_n$ for any $n$. In particular, we obtain partial answers to the above questions, namely $27\notin\Theta_5$, 
$59\notin\Theta_6$, and $123\notin\Theta_7$. 

In what follows we recall some known results which we need for the proof of our main result. Here we allow the spaces $\cH_i$ to have any finite dimension $\ge 2$.

Let $\r$ denote an $n$-partite state $\r:=\r_{12\cdots n}$ on the space $\cH_1\ox\cH_2 \ox \cdots \ox\cH_n$. We say that $\r$ is a  {\em PPT state} if the partial transpose of $\r$ with respect to any subsystem is positive semidefinite. 
We use the acronym PPTES to denote entangled PPT states. 
If $\r$ is a separable state, then $L(\r)$ denotes its length. 
For any linear operator $\r$ we denote its range by $\cR(\r)$ and its nullspace by $\ker\r$.

We say that a bipartite state $\r$ is a $k\times l$ {\em state} if its reduced states $\r_1$ and $\r_2$ have ranks $k$ and $l$, respectively. 
In the bipartite case, $\r^\G$ will denote the partial transpose of $\r$ with respect to the first system. 
We refer to the ordered pair $(\rank\r,\rank\r^\G)$ as the 
{\em birank} of $\r$.

\begin{lemma}
\label{le:supposerho}
(i) Let $\rho$ be a $2\times N$ PPT state and 
$\ket{a,b}\in\ker\r\cap(\cR(\r_1)\ox\cR(\r_2))$ a unit product vector. Then there exists $\lambda>0$ such that 
$\s:=\r-\lambda\proj{a,b}\ge0$, $\rank\s=\rank\r-1$, 
$\rank\s^\G=\rank\r^\G-1$, $\rank\s_2=\rank\r_2-1=N-1$, and 
the state $\s$ is PPT. 

(ii) If $\r$ is a $2\times2$ or $2\times3$ separable state of birank $(r,s)$, then $L(\r)=\max(r,s)$.
\qed
\end{lemma}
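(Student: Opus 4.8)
\emph{Part (i).} My plan is to subtract the largest admissible multiple of $\proj{a,b}$ and then track the three ranks at the critical value. Exposing the qubit factor $\cH_1$, write $\r$ in $2\times2$ block form with blocks acting on $\cH_2$,
$$\r=\begin{pmatrix}A&B\\ B^\dg&C\end{pmatrix},\qquad \r^\G=\begin{pmatrix}A&B^\dg\\ B&C\end{pmatrix},\qquad \r_2=A+C,$$
and put $\ket a=\a\ket0+\b\ket1$, so that $\ket{a,b}=\a\ket{0,b}+\b\ket{1,b}$ and $\ket{a^*,b}=\bar\a\ket{0,b}+\bar\b\ket{1,b}$. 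I would let $\lambda$ be the largest real with $\s:=\r-\lambda\proj{a,b}\ge0$. Since $\ket{a,b}\in\cR(\r)$ (so that $\proj{a,b}$ can be subtracted while preserving positivity), this $\lambda$ is positive and equals $1/\bra{a,b}\r^{-1}\ket{a,b}$, where $\r^{-1}$ is the inverse on $\cR(\r)$; at this value exactly one new vector (namely $\r^{-1}\ket{a,b}$) enters the kernel, so $\rank\s=\rank\r-1$, while $\ker\r\subseteq\ker\s$ because $\ket{a,b}\perp\ker\r$.

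\emph{The crux.} What remains is that this \emph{same} $\lambda$ keeps $\s^\G\ge0$ with $\rank\s^\G=\rank\r^\G-1$ and forces $\rank\s_2=N-1$. Since $\s_2=\tr_1\s=\r_2-\lambda\proj b$, the last equality says precisely that $\lambda$ is the threshold $1/\bra b\,\r_2^{-1}\ket b$ at which $\r_2-\lambda\proj b$ loses rank. Thus everything hinges on showing that the positivity thresholds for $\s$, $\s^\G$ and $\s_2$ coincide; I expect this to be the main obstacle. I would attack it using that $\r$ and $\r^\G$ share the diagonal blocks $A,C$ and differ only through $B\leftrightarrow B^\dg$, while $\r_2=A+C$, so that the quadratic forms $\bra{a,b}\r^{-1}\ket{a,b}$, $\bra{a^*,b}(\r^\G)^{-1}\ket{a^*,b}$ and $\bra b\,\r_2^{-1}\ket b$ are controlled by a single pencil in the qubit parameter $(\a,\b)$. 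Granting the coincidence, the rank bookkeeping is consistent: letting $\ket w$ be the vector leaving $\cR(\r_2)$, positivity of $\s$ forces $\bbC^2\ox\ket w\subseteq\ker\s$; as $\ket w\in\cR(\r_2)$ this plane is not contained in $\ker\r$, so it meets $\ker\r$ in exactly a line, and the transversal direction is the unique new null vector of $\s$ — reconciling $\rank\s_2=N-1$ with $\rank\s=\rank\r-1$.

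\emph{Part (ii).} The bound $L(\r)\ge\max(r,s)$ is immediate: a length-$L$ decomposition $\r=\sum_{i=1}^{L}p_i\proj{a_i,b_i}$ gives $\rank\r\le L$, and $\r^\G=\sum_i p_i\proj{a_i^*,b_i}$ gives $\rank\r^\G\le L$. (Partial transposition carries the $\ket{a_i,b_i}$ to the $\ket{a_i^*,b_i}$ by an antilinear bijection of $\cH$, which preserves the dimension of their span; hence $r=s$ for every separable state, and $\max(r,s)=r=s$.) For the reverse bound I would induct on $N=\rank\r_2$. If $N=1$ then $\r=\tau\ox\proj b$ and $L(\r)=\rank\tau=\rank\r$. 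If $N\ge2$, pick a product vector $\ket{a,b}$ from a decomposition of $\r$; it lies in $\cR(\r)$ with $\ket b\in\cR(\r_2)$, so part (i) yields a PPT state $\s=\r-\lambda\proj{a,b}$ with $\rank\s=\rank\r-1$ and $\rank\s_2=N-1$. Being $2\times2$ or $2\times3$, $\s$ is separable by the Horodecki criterion, so by induction $L(\s)=\rank\s=\rank\r-1$; restoring the subtracted term gives $L(\r)\le\rank\r=\max(r,s)$, and with the lower bound, $L(\r)=\max(r,s)$. The only delicate input is again part (i): that a single subtraction lowers both $\rank\r$, $\rank\r^\G$ and the marginal rank $\rank\r_2$ together, which is exactly what the $2\times N$ structure secures.
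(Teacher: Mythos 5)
Your part (i) quietly replaces the lemma's hypothesis by a different one, and everything downstream depends on that substitution. The lemma places $\ket{a,b}$ in $\ker\r\cap(\cR(\r_1)\ox\cR(\r_2))$; you argue from ``$\ket{a,b}\in\cR(\r)$'', which is not implied ($\cR(\r_1)\ox\cR(\r_2)$ contains $\cR(\r)$ and, in the paper's application, nonzero kernel vectors too --- there the vector genuinely lies in $\ker\r$). Under the actual hypothesis your construction collapses at the first step: since $\bra{a,b}\r\ket{a,b}=0$, the largest $\lambda$ with $\r-\lambda\proj{a,b}\ge0$ is $0$, not positive. (The printed statement does contain a misprint, but the repair is the opposite of yours: because the conclusion forces $\s_2=\r_2-\lambda\proj{b}$, the subtracted vector must be of the form $\ket{*,b}$, and $\r-\lambda\proj{*,b}\ge0$ with $\lambda>0$ requires $\ket{*,b}\in\cR(\r)\perp\ket{a,b}$, i.e.\ the subtracted projector is $\proj{a^\perp,b}$, exactly as in the cited source \cite[Lemma 7]{kck00}, while $\ket{a,b}$ stays in the kernel. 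That kernel hypothesis is the engine of the lemma: in block form with respect to $\{\ket{a},\ket{a^\perp}\}$, $\r\ket{a,b}=0$ together with PPT forces $A\ket{b}=B\ket{b}=B^\dg\ket{b}=0$, and this is what makes all three ranks drop at a single $\lambda$.) Worse, under your reading the ``crux'' you defer --- coincidence of the positivity thresholds of $\s$, $\s^\G$ and $\s_2$ --- is false, not merely unproven: take $\r=\tfrac14 I$ on $2\ox2$ and $\ket{a,b}=\ket{0,0}$; the thresholds for $\r$ and $\r^\G$ are $1/4$, but the threshold for $\r_2=\tfrac12 I$ is $1/2$, so at your $\lambda=1/4$ one gets $\rank\s=3$ while $\rank\s_2=2\neq N-1$. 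So part (i) is both incomplete on its own terms (``granting the coincidence'') and unprovable along the route chosen.

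Part (ii) inherits the damage: your induction applies ``part (i)'' to a product vector taken from a separable decomposition of $\r$, i.e.\ a vector in $\cR(\r)$ and never in $\ker\r$, so the step is not covered by the lemma being proved; and with $\lambda$ fixed only by positivity of $\r$ you have no control of $\s^\G$, so the appeal to the Peres--Horodecki criterion (which needs $\s$ to be PPT, not merely $\s\ge0$) is unjustified --- the same $\tfrac14 I$ example also kills the claim $\rank\s_2=N-1$ on which your induction on $N$ runs. Your correct pieces (the bound $L(\r)\ge\max(r,s)$ and the antilinear-conjugation argument that separable states have equal birank entries) are the easy half. For comparison, the paper does not reprove any of this: part (i) is quoted from \cite[Lemma 7]{kck00} except for the PPT property of $\s$, which is obtained by exhibiting $\s$ as a local filtering $(I\ox V)\r(I\ox V^\dg)$ of $\r$, where $V=W^{-1}PW$ annihilates $\ket{b}$; part (ii) is quoted from \cite[Proposition 3]{cd12}.
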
 
\begin{proof}
(i) For all assertions, except the PPT property of $\s$, see 
\cite[Lemma 7]{kck00} and its proof. We shall prove the PPT property. By tracing out the first party in the equation 
$\r=\s+\lambda\proj{a,b}$, we obtain that 
$\r_2=\s_2+\lambda\proj{b}$. It follows that 
$\ket{b}\notin\cR(\s_2)$. Hence, there exists an invertible 
linear operator $W$ on $\cH_2$ which makes $P:=W\s_2 W^\dg$ into  an orthogonal projector and such that $PW\ket{b}=0$. 
If $V:=W^{-1}PW$ then $(I\ox V)\r=(I\ox V)\s$ because 
$PW\ket{b}=0$, and so we have 
$(I\ox V)\r(I\ox V^\dg)=(I\ox V)\s(I\ox V^\dg)$. Thus to prove 
that $\s$ is PPT it suffices to verify that 
$(I\ox V)\s(I\ox V^\dg)=\sigma$. This can be done as follows.

Let $\s_2=\sum^{N-1}_{i=1}\proj{b_j}$ and $P=\sum^{N-1}_{i=1}\proj{j}$. Since $P=W\s_2 W^\dg$ we may assume that $W\ket{b_j}=\ket{j}$. Since $\s_2$ is a reduced density operator of $\s$, we have $\s=\sum_i \proj{\a_i}$ where $\ket{\a_i}=\sum^r_{j=1} \ket{a_{j,i},b_j}$ for some vectors $\ket{a_{j,i}}$. Then one can verify that $(I\ox V)\s(I\ox V^\dg)=\sigma$.

(ii) is proved in \cite[Proposition 3]{cd12}. 
\end{proof}

In the following lemma, $r_j$ denotes the rank of the $j$th reduced density operator, $\r_j$, of the state $\r$. 

\begin{lemma} {\rm (see \cite{cd13}).} \label{le:theppt}
(i) Any $n$-partite PPT state of rank at most three is separable. 

(ii) If $\r$ is an $n$-partite PPTES of rank four then either $n=2$ and $r_1=r_2=3$ or $n=3$ and $r_1=r_2=r_3=2$. 
\qed
\end{lemma}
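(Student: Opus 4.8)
The plan is to reduce everything to the bipartite situation, where it is classical that a PPT entangled state (PPTES) has rank at least four, and to lower the rank by peeling off product vectors through the mechanism of Lemma~\ref{le:supposerho}(i). Both parts are then handled by induction. For part (i) I would induct on $r:=\rank\r\le3$. If $r=1$ then $\r=\proj{\ps}$ is pure, and being PPT across each cut $\{j\}\mid\{\text{rest}\}$ forces $\ket{\ps}$ to have Schmidt rank one across that cut; iterating over $j$ shows $\ket{\ps}$ is a genuine product vector, so $\r$ is separable.

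For $r\in\{2,3\}$ the aim is to exhibit a unit product vector $\ket{a_1,\ldots,a_n}\in\cR(\r)\cap\big(\cR(\r_1)\ox\cdots\ox\cR(\r_n)\big)$ and subtract a positive multiple of $\proj{a_1,\ldots,a_n}$; by Lemma~\ref{le:supposerho}(i) this keeps the state PPT and lowers its rank by one, after which the induction hypothesis makes the residue, and hence $\r$, separable. To locate the product vector I would view $\r$ bipartitely across a single-party cut, where it is a PPT state of rank $\le3$, and invoke the classical low-rank result that every bipartite PPTES has rank $\ge4$; thus $\r$ is separable across every cut. The delicate point in (i) is to upgrade separability across all cuts, in rank $\le3$, to a common fully product decomposition. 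A plain dimension count on the Segre variety will not suffice, since low-dimensional completely entangled subspaces already exist in $2\times3$, so the PPT hypothesis must be used essentially here; this is the main obstacle of part (i).

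For part (ii) let $\r$ be an $n$-partite PPTES with $\rank\r=4$. First, if some $r_j=1$ the $j$th party is pure and uncorrelated, so $\r$ splits off a tensor factor and reduces to a genuinely $(n-1)$-partite rank-four PPTES; hence I may assume every $r_j\ge2$. Next, if $\r$ is not an edge state, a product vector can be subtracted as in Lemma~\ref{le:supposerho}(i), producing a rank-three PPT state that is separable by part (i); adding back the product term would then make $\r$ separable, a contradiction. Therefore $\r$ is a rank-four PPT edge state, and the heart of the argument---and the step I expect to be hardest---is to classify these. Here I would study $\cR(\r)$ and the various $\cR(\r^\G)$ through the range criterion: for an edge state no product vector of $\cR(\r)$ has its partial conjugate in the corresponding $\cR(\r^\G)$, and a dimension count on the Segre variety and on completely entangled subspaces restricts the admissible local dimensions. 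In the bipartite branch one shows the only possibility is $3\times3$ with $r_1=r_2=3$, excluding $2\times N$; in the multipartite branch the same counting forces $n=3$ with all $d_j=r_j=2$. Proving that these are the only surviving configurations, in particular ruling out $2\times N$ and every $n\ge4$, is precisely where the real work lies.
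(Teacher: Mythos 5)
You should first be aware that the paper contains no proof of this lemma to compare against: it is quoted, with a \qed, directly from \cite{cd13}, where its proof occupies essentially the whole article. Your proposal is therefore competing with that paper, and as you yourself concede at two points, what you have is a roadmap whose decisive steps are left open; both of those steps are genuine gaps, not routine details. For part (i), your induction delivers only that $\rho$ is separable with respect to every bipartite cut, and the passage from there to full $n$-partite separability is not a formality --- it is false in general: there exist three-qubit states that are separable across each of the three bipartitions and nevertheless are not fully separable (such examples come precisely from the UPB literature this paper builds on). So the implication must be wrung out of the low-rank PPT structure itself, and that is exactly the content of \cite{cd13}, not something the induction scaffolding supplies. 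There is also a technical misuse: Lemma \ref{le:supposerho}(i) concerns a $2\times N$ state and a product vector in $\ker\rho$, not a product vector in $\mathcal{R}(\rho)$ that one subtracts, and its conclusion preserves the PPT property only with respect to the single cut $1|2\cdots n$; your induction needs the residue to remain an $n$-partite PPT state, i.e., PPT with respect to every subsystem, which that lemma does not give you.

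For part (ii) the same criticism applies with more force. The preliminary reductions (splitting off parties with $r_j=1$, reducing to edge states by subtraction plus part (i)) are sensible, but the classification of rank-four PPT edge states --- ruling out $2\times N$ with $N\ge3$, ruling out every $n\ge4$, and pinning down the $3\times3$ and $2\times2\times2$ configurations with the stated local ranks --- is, in your own words, ``where the real work lies,'' and no argument is offered for it. You also note, correctly, that Segre-variety dimension counts cannot suffice because low-dimensional completely entangled subspaces exist already in $2\times3$; but having disqualified the only concrete tool you propose, nothing replaces it. So the proposal identifies the right difficulties but proves neither statement; the missing steps are not gaps in an otherwise complete argument, they are the theorem.
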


The following lemma is a special case of Kruskal's theorem 
(see \cite{JBK1977} and 
\cite[Theorem 12.5.3.1, p. 306]{JML2012}).
\bl
\label{le:Kruskal}
Let $\ket{a}=\ket{a_1,a_2,a_3}$, $\ket{b}=\ket{b_1,b_2,b_3}$, 
$\ket{c}=\ket{c_1,c_2,c_3}$ and $\ket{d}=\ket{d_1,d_2,d_3}$  
be product vectors of a tripartite system and let the vectors 
$\ket{a_i}$ and $\ket{b_i}$ be linearly independent for 
$i=1,2,3$. Then the equality $\ket{a}+\ket{b}=\ket{c}+\ket{d}$ 
implies that, up to phase factors, $\{\ket{a},\ket{b}\}=\{\ket{c},\ket{d}\}$. 
\el

For the convenience of the reader, let us state
\cite[Remark, p. 7]{cd16} as a lemma.

\bl \label{le:2xN}
All OPBs of a bipartite system $\cH_1\ox\cH_2$ with 
$\dim\cH_1=2$ can be constructed by the following method. First choose an orthogonal decomposition 
$\cH_2=X_1\oplus \cdots \oplus X_m$ with $k_j:=\dim X_j \ge 1$
and $\sum k_j=\dim\cH_2$. 
Next choose $m$ pairwise different o.n. bases 
$\{\ket{v_j},\ket{v_j^\perp}\}$, $j\in\{1,2,\ldots,m\}$ of 
$\cH_1$. 
Finally, for each $j$, choose two arbitrary o.n. bases 
$\{ \ket{x_{j,1}},\ket{x_{j,2}},\ldots,\ket{x_{j,k_j}}\}$ and 
$\{ \ket{y_{j,1}},\ket{y_{j,2}},\ldots,\ket{y_{j,k_j}}\}$ of $X_j$. 
Then the product vectors 
\begin{eqnarray*}
&&
\ket{v_j,x_{j,1}},\ldots,\ket{v_j,x_{j,k_j}}, \\
&&
\ket{v_j^\perp,y_{j,1}},\ldots,\ket{v_j^\perp,y_{j,k_j}}, \\
&&
j=1,\ldots,m,
\end{eqnarray*}
form an OPB of $\cH$.
\el

\section{Main result}
\label{sec:main}

In this section we present our main result on the multiqubit UPBs in Theorem \ref{thm:the}. In addition to the known results in Sec. \ref{sec:rank5} we shall need two more lemmas.

\begin{lemma} \label{le:suppose}
Suppose there is a UPB $\cU\subset\cH$ of cardinality $2^n-5$ which is orthogonal to five mutually orthogonal states $\ket{a}\ox\ket{\ph_j}$, $j=1,\ldots,5$, with $\ket{a}\in\cH_1$
a unit vector. Then $\cH':=\cH_2\ox\cdots\ox\cH_n$ has a UPB of cardinality $2^{n-1}-5$ which is orthogonal to the states $\ket{\ph_j}$.
\end{lemma}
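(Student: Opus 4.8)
The plan is to reduce the statement to the classification of orthogonal product bases of a $2\times N$ bipartite system provided by Lemma \ref{le:2xN}, regarding $\cH=\cH_1\ox\cH'$ as bipartite with $\dim\cH_1=2$ and $N:=2^{n-1}$. First I would record the linear-algebraic shape of the hypothesis. Put $W:=\lin\{\ket{\ph_1},\ldots,\ket{\ph_5}\}\subseteq\cH'$; since the vectors $\ket{a}\ox\ket{\ph_j}$ are mutually orthogonal, the $\ket{\ph_j}$ are orthonormal and $\dim W=5$. As $\cU$ consists of $2^n-5$ pairwise orthogonal (hence independent) vectors, all orthogonal to the five-dimensional space $\ket{a}\ox W:=\{\ket{a}\ox\ket{\xi}:\ket{\xi}\in W\}$, the orthogonal complement of $\lin\cU$ in $\cH$ is exactly $\ket{a}\ox W$. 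Every vector of $\ket{a}\ox W$ is a bipartite product vector (its first factor is $\ket{a}$), so adjoining the orthonormal vectors $\ket{a}\ox\ket{\ph_j}$ to $\cU$ yields an OPB $\tilde\cU$ of the bipartite system $\cH_1\ox\cH'$.

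Next I would apply Lemma \ref{le:2xN} to $\tilde\cU$. It furnishes an orthogonal decomposition $\cH'=X_1\oplus\cdots\oplus X_m$, pairwise distinct o.n.\ bases $\{\ket{v_j},\ket{v_j^\perp}\}$ of $\cH_1$, and o.n.\ bases $\{\ket{x_{j,l}}\}_l$, $\{\ket{y_{j,l}}\}_l$ of $X_j$, such that $\tilde\cU=\bigcup_j(\{\ket{v_j,x_{j,l}}\}_l\cup\{\ket{v_j^\perp,y_{j,l}}\}_l)$. The crucial observation is that all five added vectors have first factor $\ket{a}$, while each vector of $\tilde\cU$ has first factor equal to some $\ket{v_j}$ or $\ket{v_j^\perp}$; because the bases are pairwise distinct, the pair $\{\ket{a},\ket{a^\perp}\}$ equals $\{\ket{v_j},\ket{v_j^\perp}\}$ for at most one index $j$. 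Hence there is a unique block, say $j=1$ with $\ket{v_1}=\ket{a}$, that carries every vector whose first factor is $\ket{a}$, namely $\{\ket{a}\ox\ket{x_{1,l}}\}_{l=1}^{k_1}$. The five added vectors lie among these, so after relabelling we may take $\ket{x_{1,l}}=\ket{\ph_l}$ for $l\le5$, giving $W=\lin\{\ket{x_{1,1}},\ldots,\ket{x_{1,5}}\}\subseteq X_1$.

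Finally I would exhibit the reduced UPB explicitly as the family of second tensor factors
\[
\cU':=\{\ket{x_{1,l}}:6\le l\le k_1\}\cup\bigcup_{j=2}^m\{\ket{x_{j,l}}:1\le l\le k_j\}.
\]
Each listed vector is the $\cH'$-component of a genuine (fully product) vector of $\cU$, hence is itself a product vector of $\cH'$; the vectors are pairwise orthogonal because they lie in the mutually orthogonal blocks $X_j$ and form parts of o.n.\ bases; and they are orthogonal to $W\subseteq X_1$ precisely because the indices $l\le5$ of block $1$ have been dropped. Counting gives $|\cU'|=(\sum_j k_j)-5=2^{n-1}-5$, so $\cU'$ is an orthogonal product basis of $W^\perp$. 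For unextendibility, any product vector $\ket{w}\in\cH'$ orthogonal to all of $\cU'$ must lie in $(W^\perp)^\perp=W$; then $\ket{a}\ox\ket{w}$ would be a product vector orthogonal to every vector of $\cU$ (since $\ket{a}\ox\ket{w}\in\ket{a}\ox W=(\lin\cU)^\perp$), contradicting the unextendibility of $\cU$. Thus $\cU'$ is the desired UPB of $\cH'$ of cardinality $2^{n-1}-5$ orthogonal to each $\ket{\ph_j}$.

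I expect the main obstacle to be the concentration step, namely showing that the five exceptional vectors all fall inside a single block of the Lemma \ref{le:2xN} decomposition; this rests on the remark that, up to phase, at most one of the bases $\{\ket{v_j},\ket{v_j^\perp}\}$ can coincide with $\{\ket{a},\ket{a^\perp}\}$. A second point that must be stated carefully, though it is elementary, is that the second factors produced by the lemma are the actual components of the original fully product vectors of $\cU$, and are therefore themselves product vectors in $\cH'$.
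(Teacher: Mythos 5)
Your proposal is correct and follows essentially the same route as the paper's own proof: adjoin the five vectors $\ket{a}\ox\ket{\ph_j}$ to $\cU$ to form a bipartite OPB of $\cH_1\ox\cH'$, invoke Lemma \ref{le:2xN}, observe that all five added vectors must sit in a single block of the decomposition (the paper phrases this as the normalization $\ket{a}=\ket{a_1}$, $\ket{\ph_j}=\ket{\ph_{1j}}$), and extract the remaining second tensor factors as the desired UPB of cardinality $2^{n-1}-5$. Your write-up merely makes explicit two points the paper leaves implicit, namely the concentration step and the final unextendibility argument via $\lin\cU'=W^\perp$ and the product vector $\ket{a}\ox\ket{w}$.
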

\begin{proof}
The set $\cV:=\cU\cup\{\ket{a}\ox\ket{\ph_j}:j=1,\ldots,5\}$ is an OPB when we view $\cH$ as a bipartite system 
$\cH=\cH_1\ox\cH'$. By Lemma \ref{le:2xN}, $\cV$ has the form
\begin{eqnarray}
\cV=\{\ket{a_i,\ph_{ij}}:i=1,\ldots,m,~j=1,\ldots,k_i\} 
\notag\\
\cup 
    \{\ket{a_i^\perp,\ps_{ij}}:i=1,\ldots,m,~j=1,\ldots,k_i\},
\end{eqnarray}
where the $\{\ket{a_i},\ket{a_i^\perp}\}$ are $m$ different o.n. 
bases of $\cH_1$. We may assume that $\ket{a}=\ket{a_1}$ and 
$\ket{\ph_j}=\ket{\ph_{1j}}$, $j=1,\ldots,5$. 
As $\braket{a_i}{a_j}\ne0$ for $i\ne j$, the set 
$\cW:=\{\ket{\ph_{ij}}:i=1,\ldots,m,~j=1,\ldots,k_i\}$ 
is an o.n. basis of $\cH'$. It consists of five entangled 
states $\ket{\ph_j}=\ket{\ph_{1j}}$ and $2^{n-1}-5$ product states $\ket{\ph_{ij}}$ for which $\ket{a_i,\ph_{ij}}\in\cU$. 
Since the subspace spanned by the former contains no product vector, the latter vectors form a UPB in $\cH'$ of cardinality 
$2^{n-1}-5$.
\end{proof}

\bl
\label{le:bipartite}
We shall view the $n$-qubit Hilbert space $\cH$ also as a 
bipartite space $\cH=\cH_1\ox\cH'$, where 
$\cH':=\cH_2\ox\cdots\ox\cH_n$.
Let $\r$ be a state of rank five on $\cH$ which is PPTES as 
$n$-partite state and $2\times m$ as a bipartite state. Assume further that $\ker\r$ has an o.n. basis consisting of 
$n$-partite product vectors. 
Then $\r$ is separable as a bipartite state and has length five, say $\r=\sum^5_{j=1}\proj{a_j,\ps_j}$. If $\r$ is a projector then the $\ket{a_j,\psi_j}$ form an o.n. basis of $\cR(\r)$.
\el
\begin{proof}
Since $\rank\r=5$ we must have $n\ge3$. Let 
$\{\ket{b_j,\phi_j}:\;j=1,\ldots,2^n-5\}$ be the o.n. basis of $\ker\r$ mentioned in the lemma. Since the 
$\ket{b_j,\phi_j}\in\ker\r$, we have 
$\bra{b_j,\phi_j}\r\ket{b_j,\phi_j}=0$ which is equivalent to 
$\bra{b^*_j,\phi_j}\r^\G\ket{b^*_j,\phi_j}=0$. As $\r$ is PPT, 
this implies that all $\ket{b^*_j,\phi_j}\in\ker\r^\G$. 
We conclude that $\rank\r^\G\le5$.

Let $\r'=\tr_1\r$, be the state obtained from $\r$ by tracing out the first qubit, and note that $m=\rank\r'$. As $\rank\r=5$, we must have $3\le m\le5$. If $m=5$ then all assertions follow easily from \cite[Corollary 3(a)]{kck00}. Thus we may assume that $m$ is 3 or 4.

Since $\rank\r=5$, the vectors $\ket{\phi_j}$ span a subspace of dimension at least $2^{n-1}-2$. As $m\ge3$, for some $k$ we have 
$\ket{\phi_k}\notin\ker\r'$. Thus we have a decomposition 
$\ket{\phi_k}=\ket{\a}+\ket{\b}$ with $\ket{\a}\in\cR(\r')$, $\ket{\b}\in\ker\r'$. Since $\ket{b_k,\phi_k}\in\ker\r$ and 
$\ket{b_k,\b}\in\cH_1\ox\ker\r'\subseteq\ker\r$, we 
deduce that $\ket{b_k,\a}\in(\cH_1\ox\cR(\r'))\cap\ker\r$. 
By Lemma \ref{le:supposerho} (i) there is a $\lambda>0$ such that 
$\s:=\r-\lambda\proj{b_k,\a}\ge0$ is a PPT state of 
birank $(4,r-1)$ and $\rank\s'=m-1$, where $\s'=\tr_1\s$.
As $m=3$ or $m=4$, by the Peres-Horodecki criterion $\s$ is separable. Hence, $\r$ is separable as a bipartite state. Since $r\le5$, it follows from Lemma \ref{le:supposerho} (ii) that $L(\s)=4$. 

The last assertion follows from the fact that $\rank\r=L(\r)=5$.
\end{proof}

Let us give an example of a state $\r$ satisfying the conditions 
of the above lemma. We take $n=4$ and 
$$
\r=\proj{0,0,0,0}+\proj{1}\ox\s,
$$
where $\s$ is a well-known 3-qubits PPTES of rank four 
constructed from the 3-qubit UPB of size 4 (see \cite{DiV03}).

On the other hand there exist multiqubit PPTES of rank five not of this type. An example is the extremal four-qubit symmetric PPTES with three-rank $(5, 7, 8)$ constructed in \cite{tah2012} at the end of Sec. III. 

Each of the above two PPTESs answers affirmatively a question 
raised in \cite[Sec. 6]{johnston14}, open problem (1). The range of the first PPTES contains the product state $\ket{0,0,0,0}$.
The range of the second PPTES is the 5-qubit symmetric subspace spanned by symmetric product states. So, the kernel of neither of these two PPTESs is spanned by a multiqubit UPB.  

Another interesting example is the so-called X-type multiqubit PPT state of rank five \cite{gs10,hk16}, which may be yet another example of a multiqubit PPTES of rank five, but so far  we have no proof that this is the case.

Now we are in a position to prove our main result.

\begin{theorem}
\label{thm:the}
There are no $n$-qubit UPBs of cardinality $2^n-5$, i.e., 
$2^n-5 \notin\T_n$.	
\end{theorem}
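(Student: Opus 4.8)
The plan is to convert the hypothetical UPB into a rank-five multiqubit PPT entangled state with completely entangled range, and then to run an induction on $n$ in which the two auxiliary lemmas do the reduction.

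First I would pass from a putative UPB $\cU\su\cH$ of cardinality $2^n-5$ to the orthogonal projector $\r$ (suitably normalized) onto the five-dimensional space $\cR(\r)=\cU^\perp$. Since $\ker\r=\lin\cU$ is spanned by an orthonormal family of $n$-partite product vectors, each partial transpose of $\r$ is again the identity minus a projector onto an orthonormal family of partially conjugated product vectors, hence positive; so $\r$ is PPT. Unextendibility of $\cU$ says exactly that $\cR(\r)$ contains no product vector, which forces $\r$ to be entangled. Thus $\r$ is a rank-five $n$-qubit PPTES whose kernel has an orthonormal product basis, and the theorem reduces to showing no such $\r$ exists. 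I would induct on $n$, the cases $n=3,4$ being the recorded facts $3\notin\T_3$ and $11\notin\T_4$. (For $n=3$ one may instead note that a completely entangled subspace of three qubits has dimension at most $2^3-3-1=4<5$.)

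For the inductive step, assume $2^{n-1}-5\notin\T_{n-1}$ and examine the reduced density operators $\r_i$, each of rank $r_i\in\{1,2\}$. If some $r_i=1$, then on that qubit $\r$ factors as $\proj{a}\ox\s$, so $\cR(\r)=\ket{a}\ox\cK$ with $\cK$ five-dimensional; choosing an orthonormal basis $\ket{\ph_j}$ of $\cK$ shows that $\cU$ is orthogonal to the five mutually orthogonal states $\ket{a}\ox\ket{\ph_j}$, and Lemma~\ref{le:suppose} then produces an $(n-1)$-qubit UPB of cardinality $2^{n-1}-5$, contradicting the inductive hypothesis.

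The crux is the case in which every $r_i=2$, so that $\r$ is $2\times m$ across each single-qubit cut. Here I would invoke Lemma~\ref{le:bipartite} on each cut: $\r$ is then separable as a bipartite state of length five, and $\cR(\r)$ acquires an orthonormal basis of vectors that are product across that cut, say $\ket{a_j}\ox\ket{\ps_j}$ for the first cut. Since $\cR(\r)$ contains no genuine product vector, each $\ket{\ps_j}$ must be entangled in $\cH'=\cH_2\ox\cdots\ox\cH_n$. The remaining task is to show this is impossible. The set $\cU\cup\{\ket{a_j,\ps_j}\}$ is a $2\times2^{n-1}$ OPB, so Lemma~\ref{le:2xN} fixes its block structure; combining the product-across-a-cut bases coming from the different cuts, and playing the uniqueness of tripartite product decompositions (Lemma~\ref{le:Kruskal}) against the low-rank PPT classification (Lemma~\ref{le:theppt}) applied to the reduced state $\tr_1\r$, one aims to force a genuine $n$-partite product vector into $\cR(\r)$, contradicting unextendibility. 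I expect this all-local-ranks-two case to be the main obstacle: reconciling five entangled range vectors with product structure across all $n$ cuts, while keeping $\r$ entangled, is exactly where the combinatorics of the number five must be pushed to a contradiction.
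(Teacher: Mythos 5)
Your setup is sound and coincides with the paper's own scaffolding: you pass to the rank-five projector $\r$ onto $\cU^\perp$, correctly argue it is PPT (the partial transposes are again complements of projectors onto o.n.\ families of product vectors) and entangled (unextendibility means $\cR(\r)$ has no product vector), handle the base cases $n=3,4$, use Lemma~\ref{le:suppose} plus induction to kill the case where $\r$ factors on some qubit (equivalently, where all $\ket{a_j}$ in \eqref{eq:rho51} coincide), and invoke Lemma~\ref{le:bipartite} to get $\r=\sum_{j=1}^5\proj{a_j}\ox\proj{\ps_j}$ with $\{\ket{a_j,\ps_j}\}$ an o.n.\ basis of $\cR(\r)$ and every $\ket{\ps_j}$ entangled. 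All of this is correct and is exactly how the paper begins.

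However, the proof stops where the actual work starts. In the case where the $\ket{a_j}$ span $\cH_1$ you only name the tools (Lemmas~\ref{le:2xN}, \ref{le:Kruskal}, \ref{le:theppt}) and state the desired conclusion (``one aims to force a genuine $n$-partite product vector into $\cR(\r)$''), conceding that this is ``the main obstacle.'' That obstacle \emph{is} the theorem; everything before it is routine reduction. The paper closes it by a concrete case analysis on how many o.n.\ bases of $\cH_1$ the multiset $\{\ket{a_j}\}$ contains. With at most one o.n.\ basis (Cases 2 and 3) the argument is short: orthogonality of the $\ket{a_j,\ps_j}$ plus Lemma~\ref{le:2xN} shows suitable spans of $\ket{\ps_j}$'s have orthocomplements with product o.n.\ bases, and then separability of rank $\le 3$ PPT states, the nonexistence of $(n-1)$-qubit UPBs of cardinality $2^{n-1}-k$ for $k\le 3$, and the induction hypothesis put a product vector into $\cR(\r)$. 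With two o.n.\ bases (Case 1) the argument is long and delicate: the orthogonality pattern makes $\proj{\ps_1}+\proj{\ps_3}$ and $\proj{\ps_2}+\proj{\ps_3}$ rank-two PPT states, hence separable of length two; Kruskal's theorem (Lemma~\ref{le:Kruskal}) eliminates the configuration where these two decompositions are spread over more than two of the last $n-1$ qubits; and in the residual configuration \eqref{eq:b4} the paper runs a further subcase analysis (on $\ket{a_5}$) ending in the contradiction that a certain vector would have bipartite tensor rank both $2$ and $3$. None of this, nor any substitute for it, appears in your proposal. Moreover, your stated plan---applying Lemma~\ref{le:bipartite} across \emph{every} single-qubit cut and ``combining'' the resulting bases---is not the paper's route (the paper works with the single cut $\cH_1\ox\cH'$, permuting the remaining qubits only inside Case 1), and you give no mechanism by which the cut-by-cut decompositions would be reconciled; the uniqueness needed for such a reconciliation is precisely what fails when the $\ket{\ps_j}$ are only partially entangled. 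As it stands, the proposal is a correct reduction plus an accurate list of ingredients, not a proof.
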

\begin{proof}
Since $\Theta_3=\{4,8\}$ and $\T_4=\{6,7,8,9,10,12,16\}$, the 
assertion is true for $n=3,4$. 
Assume that the assertion fails for some $n>4$ and let $n$ be the smallest such integer. We have to derive a contradiction.

By our assumtion, there exists a UPB $\cU\subset\cH$ of cardinality $2^n-5$. We can write it as
$$
\cU=\{\ket{u_i,\phi_i}:i=1,2,\ldots,2^n-5\},
$$
where $\ket{u_i}\in\cH_1$ are unit vectors and the 
$\ket{\phi_i}$ are product vectors in 
$\cH':=\cH_2\ox\cdots\ox\cH_n$. 

Denote by $\r$ the projector onto the 5-dimensional subspace 
$\cU^\perp$. By Lemma \ref{le:bipartite} we have 
\begin{equation} \label{eq:rho51}
\r=\sum^5_{j=1}\proj{a_j} \ox \proj{\psi_j},	
\end{equation}
where the vectors $\ket{a_j,\psi_j}$ form an o.n. basis of 
$\cU^\perp$ and the $\ket{a_j}\in\cH_1$ are unit vectors. 
If all the $\ket{a_j}$ are equal, then Lemma \ref{le:suppose}
implies that $\cH'$ has a UPB of cardinality $2^{n-1}-5$. 
This contradicts our choice of $n$. We conclude that the 
vectors $\ket{a_j}$ span $\cH_1$. We shall now distinguish 
three cases.

Case 1. $\{\ket{a_j}\}$ contains two o.n. bases of $\cH_1$. 

Without any loss of generality we may assume that 
$\ket{a_2}=\ket{a_1^\perp}$, $\ket{a_4}=\ket{a_3^\perp}$ 
and either $\ket{a_5}\ne\ket{a_j}$ for $j<5$ or 
$\ket{a_5}=\ket{a_4}$. 
Since the $\ket{a_j,\psi_j}$ are pairwise orthogonal, we have $\ket{\ps_3}\perp\ket{\ps_1},\ket{\ps_2}$. By using 
Lemma \ref{le:2xN}, one can verify that 
$\{\ket{\psi_1},\ket{\psi_3}\}^\perp$ has an o.n. basis 
consisting of product vectors. Consequently, 
$\a:=\proj{\ps_1}+\proj{\ps_3}$ is an $(n-1)$-partite PPT state of rank two. By \cite[Lemma 11]{cd13} it is separable of length two. The same is true for $\b:=\proj{\ps_2}+\proj{\ps_3}$, and
so we have
\begin{eqnarray}
\label{eq:ab}
\a &=& \proj{b_2,\ldots,b_n}+\proj{c_2,\ldots,c_n},
\notag\\
\b &=& \proj{d_2,\ldots,d_n}+\proj{e_2,\ldots,e_n},	
\end{eqnarray}
where $\ket{b_2,\ldots,b_n}\perp\ket{c_2,\ldots,c_n}$ and $\ket{d_2,\ldots,d_n}\perp\ket{e_2,\ldots,e_n}$ are unit  vectors. Hence 
\begin{eqnarray}
\label{eq:ps1ps3}
\ket{\ps_1},\ket{\ps_3}\in\lin\{\ket{b_2,\ldots,b_n},\ket{c_2,\ldots,c_n}\},
\notag\\
\ket{\ps_2},\ket{\ps_3}\in\lin\{\ket{d_2,\ldots,d_n},\ket{e_2,\ldots,e_n}\}. 	
\end{eqnarray}

By permuting the last $n-1$ qubits, we may assume that for some 
$m\ge2$ we have $\ket{b_j}\ne\ket{c_j}$ for $2\le j\le m$ and 
$\ket{b_j}=\ket{c_j}$ for $j>m$. 
As no $\ket{\ps_j}$ is a product vector, it follows from 
\eqref{eq:ps1ps3} that $m>2$. 

Suppose that $m>3$. Then Lemma \ref{le:Kruskal} implies that the 
sets $\{\ket{b_2,\ldots,b_n},\ket{c_2,\ldots,c_n}\}$ and 
$\{\ket{d_2,\ldots,d_n},\ket{e_2,\ldots,e_n}\}$ are the same 
(up to phase factors). It follows that the vectors 
$\ket{\ps_1}$, $\ket{\ps_2}$, $\ket{\ps_3}$ belong to the same 2-dimensional subspace, and since 
$\ket{\ps_3}\perp\ket{\ps_1},\ket{\ps_2}$ we deduce that
$\ket{\ps_1}=\ket{\ps_2}$. Since $\ket{a_1,\ps_1}$ and 
$\ket{a_2,\ps_2}=\ket{a_1^\perp,\ps_1}$ belong to $\cR(\r)$, so does $\ket{a_3,\ps_1}$. Since also $\ket{a_3,\ps_3}\in\cR(\r)$, we deduce that $\ket{a_3,b_2,\ldots,b_n}\in\cR(\r)$. This 
contradicts the assumption that $\cU$ is a UPB. 

Hence, we must have $m=3$ and so
\begin{equation}
\label{eq:b4}
\ket{b_4,\ldots,b_n}=\ket{c_4,\ldots,c_n}=
\ket{d_4,\ldots,d_n}=\ket{e_4,\ldots,e_n}.	
\end{equation}
As mentioned earlier, we have to consider two possibilities for 
$\ket{\ps_5}$. 

The first one is that $\ket{a_5}\ne\ket{a_j}$ for $j<5$. By using Lemma \ref{le:2xN}, one can verify that the subspace $\{\ket{\ps_1},\ket{\ps_3},\ket{\ps_5}\}^\perp$ of $\cH'$ has an o.n. basis consisting of product vectors. As the vectors $\ket{a_j,\psi_j}$ are mutually orthogonal, we have 
$\ket{\ps_5}\perp \ket{\ps_1},\ket{\ps_3}$. Hence, by adjoining the product vectors $\ket{b_2,\ldots,b_n}$ and 
$\ket{c_2,\ldots,c_n}\}$ to the above mentioned o.n. basis, we obtain an o.n. basis for the hyperplane of $\cH'$ orthogonal to 
$\ket{\ps_5}$. As $\ket{\ps_5}$ is not a product vector and $\cH'$ has no UPBs of cardinality $2^{n-1}-1$, we have a contradiction.

The second possibility is that $\ket{a_5}=\ket{a_4}$. As the vectors $\ket{a_j,\psi_j}$ are mutually orthogonal, the same is true for the vectors $\ket{\ps_1}$, $\ket{\ps_4}$ and 
$\ket{\ps_5}$.
By using Lemma \ref{le:2xN}, one can verify that the subspace $\{\ket{\ps_1},\ket{\ps_4},\ket{\ps_5}\}^\perp$ of $\cH'$ has an o.n. basis consisting of product vectors. 
Since $2^{n-1}-k\notin\T_{n-1}$ for $k\in\{1,2,3\}$, the 
subspace spanned by $\ket{\ps_1}$, $\ket{\ps_4}$ and 
$\ket{\ps_5}$ has an o.n. basis consisting of product vectors, 
say $\ket{p}=\ket{p_2,\ldots,p_n}$, 
$\ket{q}=\ket{q_2,\ldots,q_n}$ and 
$\ket{r}=\ket{r_2,\ldots,r_n}$. It follows that there exists an order-3 unitary matrix $[u_{ij}]$ such that 
\begin{eqnarray}
\label{eq:ps1ps4}
\ket{\ps_1}=u_{11}\ket{p}+u_{12}\ket{q}+u_{13}\ket{r}, \notag\\
\ket{\ps_4}=u_{21}\ket{p}+u_{22}\ket{q}+u_{23}\ket{r}, \notag\\
\ket{\ps_5}=u_{31}\ket{p}+u_{32}\ket{q}+u_{33}\ket{r}.
\end{eqnarray}
By the argument we used above to prove that the $\ket{a_j}$ are 
not all equal, we can show that there is no $j>3$ such that 
$\ket{p_j}=\ket{q_j}=\ket{r_j}=\ket{b_j}$. 

If $u_{11}=0$ then we have $u_{22}u_{33}-u_{23}u_{32}=0$ and by
taking a suitable linear combination of $\ket{a_4,\ps_4}$ 
and $\ket{a_4,\ps_5}$ we obtain that the product vector 
$\ket{a_4,p}\in\cR(\r)$, i.e. we have a contradiction. 
Thus $u_{11}\ne0$, and similarly $u_{12}\ne0$ and $u_{13}\ne0$. 
From the equations \eqref{eq:ps1ps3} and \eqref{eq:b4} we obtain that
\begin{eqnarray}
\label{eq:xy}	
\ket{\ps_1}
&=&
(\xi\ket{b_2,b_3} + \eta\ket{c_2,c_3}) \ox \ket{b_4,\ldots,b_n}
\notag\\
&=&
u_{11}\ket{p_2,\ldots,p_n}
+u_{12}\ket{q_2,\ldots,q_n}
+u_{13}\ket{r_2,\ldots,r_n},
\notag\\
\end{eqnarray}
where $\xi\eta\ne0$ and $\ket{b_j}\ne\ket{c_j}$ for $j=2,3$. 
Since we have shown that for $j>3$ at least one of the three unit vectors $\ket{p_j}$, $\ket{q_j}$, $\ket{r_j}$ is not equal to $\ket{b_j}$, the equation \eqref{eq:xy} implies that at most one of the same three vectors can be equal to $\ket{b_j}$.

We claim that $\ket{p_j}\ne\ket{b_j}$ for $j>3$. We shall prove it by contradiction. Assume that, say, $\ket{p_n}=\ket{b_n}$. Then the equation \eqref{eq:xy} implies that 
$\ket{q_2,\ldots,q_{n-1}}=\ket{r_2,\ldots,r_{n-1}}$ and so 
\begin{eqnarray}
\label{eq:prosta}	
\ket{\ps_1}&=&
u_{11}\ket{p_2,\ldots,p_n}
+\ket{q_2,\ldots,q_{n-1}} \ox (u_{12}\ket{q_n}+u_{13}\ket{r_n}).
\notag\\
\end{eqnarray}
From \eqref{eq:xy} we see that 
$\braket{b_4,\ldots,b_n}{\psi_1}=
\xi\ket{b_2,b_3} + \eta\ket{c_2,c_3}$ has Schmidt rank 2. 
From \eqref{eq:prosta} we deduce that also 
$\ket{p_2,p_3} + \ket{q_2,q_3}$ has Schmidt rank 2. As 
$\braket{b_{n-1}^\perp}{\psi_1}=0$, from the same equation 
we obtain that
\begin{eqnarray}
u_{11}\braket{b_{n-1}^\perp}{p_{n-1}}\ket{p_2,\ldots,p_{n-2},p_n}
+\notag\\
\braket{b_{n-1}^\perp}{q_{n-1}} \ket{q_2,\ldots,q_{n-2}}
\ox (u_{12}\ket{q_n}+u_{13}\ket{r_n})=0.
\notag
\end{eqnarray}
This equation implies that 
$\braket{b_{n-1}^\perp}{p_{n-1}}=
 \braket{b_{n-1}^\perp}{q_{n-1}}=0$ which gives the 
contradiction: $\ket{p_{n-1}}=\ket{q_{n-1}}=\ket{b_{n-1}}$.
Thus our claim is proved.

Let $\ket{\theta}=\ket{\psi_1}-u_{11}\ket{p}$. By switching 
the tensor factors $\cH_3$ and $\cH_4$, $\ket{\theta}$ is mapped 
to 
\begin{eqnarray}
\xi\ket{b_2,b_4}\ox\ket{b_3,b_5,\ldots,b_n}+ 
 \eta\ket{c_2,b_4}\ox\ket{c_3,b_5,\ldots,b_n}
\notag\\
- 
 u_{11}\ket{p_2,p_4}\ox\ket{p_3,p_5,\ldots,p_n}.
\notag
\end{eqnarray}
The three first tensor factors, namely $\ket{b_2,b_4}$, 
$\ket{c_2,b_4}$ and $\ket{p_2,p_4}$, are linearly independent 
and the same holds true for the second tensor factors. 
It follows that $\ket{\theta}$ considered as bipartite tensor 
has tensor rank 3. Now the equation \eqref{eq:xy} gives a contradiction.

Case 2. $\{\ket{a_j}\}$ contains only one o.n. basis of $\cH_1$.

Say, $\ket{a_2}=\ket{a_1^\perp}$. We have 
$\r=\proj{a_1}\ox\a+\proj{a_1^\perp}\ox\b+\g$, where 
$\a$ is a sum of $p$ terms $\proj{\psi_i}$, 
$\b$ is a sum of $q$ terms $\proj{\psi_j}$, and 
$\g$ is a sum of $5-p-q$ terms $\proj{a_l,\psi_l}$. 
We may also assume that $p\le q$. 
By using Lemma \ref{le:2xN}, one can verify that the 
orthogonal complement in $\cH'$ of the set of the $\ket{\psi_i}$ 
which appear in $\a$ has an o.n. basis consisting of product vectors. Hence, $\ket{\a}$ is a PPT state of rank $p\le 2$. By Lemma \ref{le:theppt} $\a$ is separable. It follows that $\cR(\r)$ contains a product vector 
and we have a contradiction.

Case 3. $\{\ket{a_j}\}$ contains no o.n. basis of $\cH_1$.

We may assume that $\ket{a_i}=\ket{a_1}$ for $i\le p$ and $\ket{a_j}\ne\ket{a_1}$ for $j>p$. Let $V\subseteq\cH'$ be 
the subspace spanned by the $\ket{\psi_i}$ for $i\le p$. 
By the same argument as in case 2, the subspace 
$V^\perp\subseteq\cH'$ has an o.n. basis consisting of product vectors. Note that $p\notin\T_{n-1}$. This is clear when $p<5$ and it is true for $p=5$ by our choice of $n$. 
Consequently $V$ contains a product vector, say $\ket{\phi}$.
Then $\ket{a_1,\phi}$ is a product vector in $\cR(\r)$ and we have a contradiction. 

This completes the proof.	
\end{proof}

%\begin{example}
%Let $[u_{ij}]$ be an order-3 unitary matrix, $[v_{ij}]$ be an order-2 unitary matrix, $u_{22}=0$, any other $u_{ij}\ne0$, $u_{20}=v_{00}$ and $u_{21}=v_{01}$.	Then each of the four 
%unit vectors
%\begin{eqnarray}
%\ket{a}&=& u_{00}\ket{0000}+	u_{01}\ket{1111}+	u_{02}\ket{2222},
%\\
%\ket{b}&=& u_{10}\ket{0000}+u_{11}\ket{1111}+	u_{12}\ket{2222},
%\\
%\ket{c}&=&
%u_{20}\ket{0000}+u_{21}\ket{1111},
%\\
%\ket{d}&=& v_{10}\ket{0000}+	v_{11}\ket{1111}
%\end{eqnarray}
%is entangled. Further, $\{\ket{a},\ket{b},\ket{c}\}$ is an o.n. 
%triple and $\ket{c}\perp\ket{d}$. Moreover, both states $\proj{a}+\proj{b}+\proj{c}$ and $\proj{c}+\proj{d}$ are separable.
%
%More generally, one can replace here the vectors $\ket{0000}$, $\ket{1111}$, and $\ket{2222}$ by any o.n. triple of product vectors in any quantum system. 	
%\end{example}

Let $\r$ be an $n$-qubit PPTES of rank five. Since any 3-qubit subspace of dimension five contains a product vector, we must have $n>3$. Theorem \ref{thm:the} shows that $\ker\r$ is not spanned by a UPB. However $\cR(\r)$ may contain a product vector. Indeed we have shown that this is the case for the PPTESs mentioned below Lemma \ref{le:bipartite}. 
If such $\r$ exists for $n=4$ then $\r\ox\proj{a}$, with $\ket{a}$ a product vector, would be also a multiqubit PPTES of rank five whose range contains no product vectors. Intuitively we believe that

\begin{conjecture}
\label{cj:ppt}
The range of any multiqubit PPTES of rank five contains a nonzero product vector. 
\end{conjecture}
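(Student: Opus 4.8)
The plan is to argue by induction on $n$ and to reduce the whole problem to a single hard case: a four-qubit PPTES of rank five all of whose single-qubit marginals have rank two. First I would dispose of the easy reductions. Since a rank-five state forces $n>3$, and since any five-dimensional subspace of the three-qubit space $(\bbC^2)^{\ox3}$ meets the Segre variety $(\bbP^1)^3\su\bbP^7$ (because $4+3\ge7$), the range of any three-qubit rank-five state already contains a product vector; this anchors the induction. If some single-qubit marginal, say $\r_1$, has rank one, then $\cR(\r)\sue\cR(\r_1)\ox\cH'=\{a\}\ox\cH'$ forces $\r=\proj{a}\ox\s$ with $\s$ an $(n-1)$-qubit PPT state of rank five; if $\s$ is separable we are done, and otherwise the inductive hypothesis applied to $\s$ produces a product vector in $\cR(\s)$, hence in $\cR(\r)$. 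Thus I may assume every single-qubit marginal has rank two.

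Under this assumption, viewing $\r$ across the cut (qubit $i$) versus the rest makes it a $2\times m_i$ state, and $m_i=\rank\r_{\hat\imath}\le\rank\r=5$ (a reduced rank never exceeds the full rank), while $m_i\ge3$ since a $2\times2$ state cannot have rank five. The key elementary tool is then a determinantal dimension count: $\cR(\r)\sue\cR(\r_i)\ox\cR(\r_{\hat\imath})$ identifies the five-dimensional range with a five-dimensional space of $2\times m_i$ matrices, and the rank-one locus in $\bbP^{2m_i-1}$ has dimension $m_i$, so the linear $\bbP^4=\bbP(\cR(\r))$ meets it because $4+m_i\ge2m_i-1$ for $m_i\le5$. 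Hence, for every $i$, $\cR(\r)$ contains a vector that is a product across the cut (qubit $i$)$\,|\,$(rest). Since a pure state is fully product exactly when all of its single-party marginals are pure, it now suffices to locate a single vector of $\cR(\r)$ that is simultaneously product across all $n$ cuts.

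The crux, and the step I expect to be the genuine obstacle, is to reconcile these per-cut product vectors into one fully product vector. Dimension counting alone cannot do this: the cut-$i$ product vectors form a subvariety $X_i\su\bbP(\cR(\r))=\bbP^4$ of expected codimension $m_i-1$, so $\sum_i(m_i-1)$ already exceeds $4$ for $n\ge4$, and the naive intersection $\bigcap_iX_i$ is expected to be empty. The PPT hypothesis must therefore be used in an essential, non-generic way. Concretely, I would parametrise the product vectors of $\cR(\r)$ across the first cut as an explicit low-dimensional family $\ket{a(t)}\ox\ket{b(t)}$ using the $2\times m$ structure of Kraus--Cirac--Karnas \cite{kck00} together with the $2\times N$ orthogonal-basis description of Lemma \ref{le:2xN}, and then impose the PPT conditions relative to the remaining qubits; a Kruskal-type rigidity argument (Lemma \ref{le:Kruskal}) should then force some $\ket{b(t)}$ to be fully product on qubits $2,\dots,n$, so that $\ket{a(t),b(t)}$ is the desired product vector.

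An alternative line, which may be cleaner at the base case $n=4$, is to assume for contradiction that $\cR(\r)$ contains no product vector --- so that $\r$ is an edge state --- and to exploit the rigid rank structure of low-rank multiqubit PPTES: Lemma \ref{le:theppt} pins down the rank-four PPTES completely, and subtracting a product state via Lemma \ref{le:supposerho} relates rank five to rank four, which should reduce the four-qubit case to a finite, checkable analysis of the admissible marginal ranks $(r_1,r_2,r_3,r_4)$. Either way, the essential difficulty is the promotion from \emph{product across one bipartite cut} to \emph{fully product}, precisely where every dimension count breaks down and the full force of the PPT structure of rank-five states must be brought in.
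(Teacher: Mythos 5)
You should know at the outset that this statement is Conjecture \ref{cj:ppt} of the paper: the authors do not prove it. They state it after Theorem \ref{thm:the}, explicitly remark that it is \emph{stronger} than that theorem, and leave it open (their own fragmentary attempt treats only the special case where $\r$ is a projector that is $2\times4$ across the first cut, and breaks off unfinished). So there is no proof in the paper to compare yours against; the only question is whether your argument settles the conjecture, and it does not.

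Your preliminary reductions are sound: the trivial case $n\le3$, the reduction when some single-qubit marginal has rank one, and the existence, for each cut $i$, of a vector of $\cR(\r)$ that is product across that cut, by intersecting $\bbP(\cR(\r))=\bbP^4$ with the Segre variety of $2\times m_i$ matrices of rank one. One caveat: your parenthetical ``a reduced rank never exceeds the full rank'' is false for general states (any entangled pure state violates it); the inequality $\rank\r_{\hat\imath}\le\rank\r$ that you need is a genuine theorem about PPT states (see \cite{kck00,cd13}), so the PPT hypothesis must be invoked already at that point --- this is fixable. The genuine gap is exactly the step you yourself flag: producing a \emph{single} vector of $\cR(\r)$ that is product across \emph{all} cuts simultaneously. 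Everything you offer there is conditional: a Kruskal-type argument ``should then force'' full productness, and an edge-state reduction ``should reduce'' the problem to a finite check. Neither is carried out, and neither tool's hypotheses are available as stated: Lemma \ref{le:Kruskal} requires an identity equating two sums of product vectors with componentwise linear independence, and you give no construction producing such an identity from the family $\ket{a(t)}\ox\ket{b(t)}$; Lemma \ref{le:supposerho} requires a product vector in $\ker\r\cap(\cR(\r_1)\ox\cR(\r_2))$, which a general rank-five PPTES need not possess (unlike the UPB-complement states of Theorem \ref{thm:the}, whose kernels are spanned by product vectors --- that is precisely why the conjecture is strictly stronger than the theorem). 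Since, as you correctly observe, dimension counting provably cannot close this gap (the expected intersection of the varieties $X_i$ is empty for $n\ge4$), the missing step is not a routine verification but the entire content of the conjecture. What you have written is an honest and well-organized research program, with correct reductions, but it is not a proof.
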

Note that this conjecture is stronger than Theorem \ref{thm:the}.

\section{Conclusions}
\label{sec:con}

The problem of constructing UPBs in multipartite quantum systems, which is more than 15 years old, is still of interest
due to its role in various applications such as those mentioned in the Introduction. Some important advances in the case of multiqubit systems have been made recently regarding the cardinalities of UPBs in such systems. For instance the minimal size of the $n$-qubit UPB is known for all $n$. Recall that $\T_n$ denotes the set of sizes of the UPBs of the $n$-qubit system. In this paper we have proved, see Theorem \ref{thm:the},  that $2^n-5\notin\T_n$ for any $n$. In particular, 
$27\notin\Theta_5$, $59\notin\Theta_6$, and $123\notin\Theta_7$. 
This gives a partial answer to a question raised in \cite{johnston14}. We propose a conjecture about multiqubit entangled PPT states of rank 5, Conjecture \ref{cj:ppt}. 

Let us also point out that the improper UPBs (known as OPBs) of $n$-qubit systems, i.e. those of cardinality $2^n$, can be studied by using special combinatorial matrices of size 
$2^n\times n$ as in our paper \cite{cd16}. With some minor modifications, the same combinatorial technique is applicable 
to the study of proper UPBs.

\section*{Acknowledgements}

LC was supported by Beijing Natural Science Foundation (4173076), the NNSF of China (Grant No. 11501024), and the Fundamental Research Funds for the Central Universities (Grant Nos. KG12001101, ZG216S1760 and ZG226S17J6). 
The second author was supported in part by the National Sciences and Engineering Research Council (NSERC) of Canada Discovery 
Grant 5285.

\end{document}